\documentclass[11pt]{iopart}

\usepackage{bm}
\usepackage{graphicx}

\usepackage{iopams}

\expandafter\let\csname equation*\endcsname\relax

\expandafter\let\csname endequation*\endcsname\relax

\usepackage{amsmath}

\usepackage{amsfonts}
\usepackage{amssymb}
\usepackage{amsthm}
\usepackage{amstext}
\usepackage{amsbsy}
\usepackage{amsopn}
\usepackage{amscd}
\usepackage{amsxtra}
\usepackage[french,english]{babel}
\usepackage{booktabs}
\usepackage[colorlinks=true,allcolors=blue]{hyperref}


\AtBeginDocument{
\heavyrulewidth=.08em
\lightrulewidth=.05em
\cmidrulewidth=.03em
\belowrulesep=.65ex
\belowbottomsep=0pt
\aboverulesep=.4ex
\abovetopsep=0pt
\cmidrulesep=\doublerulesep
\cmidrulekern=.5em
\defaultaddspace=.5em}


%

\newtheorem{proposition}{Proposition}

\newtheorem{conjecture}{Conjecture}

\newcommand{\Mc}[1]{\mathcal{#1}}
\newcommand{\setZ}{\mathbb{Z}}
\newcommand{\setN}{\mathbb{N}}

\newcommand{\setC}{\mathbb{C}}
\newcommand{\Id}{\mathbb{I}}

\newcommand{\bra}[1]{\langle #1 |}
\newcommand{\ket}[1]{| #1 \rangle }
\newcommand{\braket}[2]{\langle #1| #2 \rangle }



\begin{document}

\title{Selective and Robust Time-Optimal Rotations of Spin Systems}
\author{Quentin Ansel}
\address{Institut UTINAM, UMR 6213 CNRS-Universit\'e de Bourgogne-Franche-Comt\'e, Observatoire de Besançon, 41~bis Avenue de l’Observatoire, BP1615, 25010 Besançon cedex, France.}
\ead{quentin.ansel@univ-fcompte.fr}
\author{Steffen J. Glaser}
\address{
Munich Center for Quantum Science and Technology (MCQST), Schellingstr. 4, D-80799, Munich\\
Department of Chemistry,
Technical University of Munich, Lichtenbergstrasse 4, D-85747
Garching, Germany}
\author{Dominique Sugny}
\address{ Laboratoire Interdisciplinaire Carnot de Bourgogne (ICB), UMR 6303 CNRS-Universit\'e de Bourgogne- Franche-Comt\'e, 9 Av. A. Savary, BP 47 870, F-21078 DIJON Cedex, France}
\vspace{10pt}
\begin{indented}
\item[]\today
\end{indented}

\begin{abstract}
We study the selective and robust time-optimal rotation control of several spin-1/2 particles with different offset terms. For that purpose, the Pontryagin Maximum Principle is applied to a model of two spins, which is simple enough for analytic computations and sufficiently complex to describe inhomogeneity effects. We find that selective and robust controls are respectively described by singular and regular trajectories.
Using a geometric analysis combined with numerical simulations, we determine the optimal solutions of different control problems. Selective and robust controls can be derived analytically without numerical optimization. We show the optimality of several standard control mechanisms in Nuclear Magnetic Resonance, but new robust controls are also designed.
\end{abstract}

\noindent{\it Keywords\/} Time-optimal control;  Ensemble control on SO(3); Bloch equation; Spin-1/2 particles; Selective and Robust processes; Pontryagin Maximum Principle.

\maketitle

\section{Introduction}

In recent years, progress in quantum control (QC)~\cite{cat,past-present-future,dong,altafini-ticozzi,RMPsugny} has emerged through the introduction of appropriate and powerful tools coming from mathematical control theory~\cite{bonnard_optimal_2012,pontryaginbook,liberzon-book,brysonbook,leemarkusbook,dalessandro-book,kirk_optimal_2004,borzi-book,boscain-book,jurdjevic-book}. These developments have been recognized as an essential requirement for the future application of quantum technologies~\cite{cat,roadmapQT}. In this context, optimal control theory (OCT) has been successfully developed and applied since the eighties to become nowadays a standard tool in quantum physics~\cite{cat,past-present-future}. The practical
use of OCT is far from being trivial and each control problem has to be analyzed using either geometric or numerical methods for low or high dimensional systems, respectively~\cite{bonnard_optimal_2012,jurdjevic-book,boscain-book}. The different methods are based on the Pontryagin Maximum Principle (PMP) which is one of the main mathematical tools in control theory. The geometric approach is well adapted to
the description of ideal or simple quantum systems. This method
leads to a complete geometric understanding of the control
problem, from which, one can deduce the structure of the
optimal solution, a proof of the global optimality, and
the physical limits of a given process such as the minimum time to reach the target state. Such results can
be determined essentially analytically or at least with a
very high numerical precision. Issues of increasing difficulty have been recently solved for
closed~\cite{boscain-mason,alessandro2001,sugny10,hegerfeldt2013} and open quantum systems~\cite{lapert2010,bonnard2012,KhanejaPNAS,khanejathree}, but also
for unitary transformations~\cite{albertini_minimum_2014,garon2013,khaneja2001,khaneja2002}. From a numerical point of view, several optimization
algorithms have been developed in QC~\cite{cat,koch2012,KHANEJA2005,calarco2011}. They are able to account for experimental constraints, or
robustness and selectivity issues. In spite of their efficiency, such methods have some limitations. Only local optimal solutions can be derived and they do not provide the underlying mechanisms of the control process. For improving robustness or selectivity of non-adiabatic control pulses, a standard scenario consists in controlling an ensemble of systems which differ from the values of one or several constant parameters~\cite{k1,k2}. This approach has been widely explored in QC, mainly by OCT~\cite{KOBZAR2004,KOBZAR2005,KOBZAR2008,KOBZAR2012,SKINNER2012}, but also by learning algorithms~\cite{rabitz14,turinici19}. Recently, different studies have put forward some geometric properties of such control fields using OCT, composite pulses, shortcut to adiabaticity or specific parameterizations (see e.g. to mention a few~\cite{van_damme_time-optimal_2018,vandamme2017a,vandamme2017b,daems:2013,barnes2019,barnes2018,vitanov2014,Ruschhaupt_2012,Zeng_2018,linature}). Analytical expressions of the control pulses have been also derived.

We propose in this study to follow this direction, and to investigate the selective and robust time-optimal control of rotations on spin systems. Rotation corresponds in this study to SO(3)-transformation. In Nuclear Magnetic Resonance (NMR)~\cite{ernstbook,levittbook}, the control of an inhomogeneous ensemble of spin 1/2 particles with different offsets represents a benchmark example~\cite{silver1985,silvernature}.  This can be used as a building block for more complex processes. The mathematical analysis is guided by a simple control mechanism in NMR which involves two spins with different Larmor frequencies. The difference of frequency, noted $\Delta_1$, is called below by \textit{the offset}. In the rotating frame of the first spin, a rectangular pulse (with constant amplitude and phase) can rotate the first spin by an angle $\phi$ (usually $\phi = \pi$ or $\pi/2$), while rotating the second spin by multiples of $2\pi$. This rotation brings back the second spin to its initial position. For a rectangular pulse with amplitude $\omega$, and duration $T$, the smallest offset difference $\Delta_1$ which realizes this control task can be computed using the relation: $T \sqrt{\omega^2+\Delta_1^2} = 2\pi$~\cite{LEVITT198661}. In other words, the target state is reached for specific values of the offset parameter. This control mechanism is the starting point of the study and is closely related to the time-optimal processes derived from the PMP. As shown below, it corresponds to the time-optimal solution of the selective rotation of the two spins. Robust pulses can be constructed with a similar approach, but they require more complex control. A key point of the results obtained in this paper is that all the pulses can be expressed in a completely analytical way as a function of system parameters without numerical optimization. This analysis therefore provides a family of control solutions for the selective and robust control of the rotation of spin systems. As a byproduct of this study, we show the optimality of several standard control mechanisms in NMR and we design new control processes, which could be interesting for experimental applications.

This paper is organized as follows. In section~\ref{sec1}, the model system and the application of the PMP are presented. In section~\ref{sec:time opt selective pulses}, we compute and characterize time-optimal selective control fields. In section~\ref{sec:time opt robust pulses}, we consider the case of time-optimal robust controls. Conclusion and prospective views are given in section~\ref{sec:conclusion}. Technical details and additional control problems are presented in the appendices.

\section{Pontryagin Maximum Principle}
\label{sec1}

In this section, we present the model system and the application of the Pontryagin Maximum Principle (PMP) to the design of SO(3)- transformations. We characterize singular and regular trajectories of Pontryagin's Hamiltonian and we show that the system cannot switch from regular to singular extremals.

\subsection{Ensemble of SO(3)-transformations}
\label{sec:Definitions of robust/selective SO(3) transformations}

We consider an inhomogeneous ensemble of uncoupled spin- 1/2 particles with different offsets $\Delta$ in a given rotating frame. The offset is a frequency difference between the frequency of the spin and the one of the rotating frame. This later is defined implicitly by $\Delta=0$ for one spin of the ensemble. SO(3)- transformations are generated for a specific isochromat by the Bloch equation~\cite{k1,levittbook}:
\begin{equation}
\begin{split}
\frac{d \hat U(\Delta,t)}{dt} & = \left(\begin{array}{ccc}
0 & \Delta & -\omega_y \\
-\Delta & 0 & \omega_x(t) \\
\omega_y& -\omega_x(t) & 0
\end{array}  \right) \hat U(\Delta,t) \\
&= \left( \omega_x \hat \epsilon_x + \omega_y \hat \epsilon_y+ \Delta \hat \epsilon_z\right) \hat U(\Delta,t), \\
\hat U(\Delta,0) &= \hat \Id,\\
\end{split}
\label{eq:Bloch_equation}
\end{equation}
where $\omega_x$ and $\omega_y$ are two time-dependent control inputs with the constraint $\omega_x^2+\omega_y^2\leq \omega_0^2$, $\omega_0$ being the bound of the control field. $\Id$ denotes the identity matrix. In this study, we investigate the case $\omega_x(t) \in [-\omega_0,\omega_0]$ and $\omega_y=0$, for which analytic results can be derived. The symbols $\hat \epsilon_x$, $\hat \epsilon_y$ and $\hat \epsilon_z$ denote generators of the $\mathfrak{so}(3)$ algebra~\cite{hall_lie_2015,dalessandro-book}. The skew-symmetric matrices $\hat \epsilon_{x,y,z}$ verify the commutation relations $[\hat \epsilon_x , \hat \epsilon_y] = -\hat \epsilon_z$, $[\hat \epsilon_y , \hat \epsilon_z] = -\hat \epsilon_x$, and $[\hat \epsilon_z , \hat \epsilon_x] = -\hat \epsilon_y$ and $\hat \epsilon_a^\intercal=-\hat \epsilon_a$. They play the same role as Pauli matrices for the group $SU(2)$. Arbitrary dimensionless units are used throughout the paper.

The robustness of a control process can be defined either locally \cite{daems:2013,vitanov2014} or globally by considering a collection
of quantum systems~\cite{KOBZAR2004,KOBZAR2008}. The local robustness is generally defined from a cost function, $F(\Delta)$ which models the fidelity of the transformation with respect to the parameter $\Delta$ at the final control time. The aim is to cancel the first-order derivatives of $F$ around a specific value of $\Delta$, i.e. $\frac{\partial^n F}{\partial \Delta ^n}\vert_{\Delta = \Delta_0} = 0, n=1,2,...,n_{max}$. The global (or broadband) robustness is based on a discretization of the parameter space. This defines an ensemble of systems which differ by the value of the parameter.. The goal is then to control simultaneously each element of the ensemble toward the same target state. At the final time, we have $F(\Delta_{(n)}) = F_0$ for any offset $\Delta_{(n)}$ in the discretized interval.

In this paper, we combine these two approaches with the specificity that distinct target states are used to describe selective control. Following the control mechanism presented in the introduction, we consider two spins with offsets such that $\Mc C_\Delta = \{\Delta_{(0)} = 0, \Delta_{(1)}=\Delta_1\}$. The target states are defined by a rotation of angle $\phi$ around the axis $x$ for the offset of frequency $0$ (the spin in resonance), and the identity transformation for the offset $\Delta_1$:
\begin{equation}
\begin{split}
\hat U_{target}(0) &= e^{\phi \hat \epsilon_x}, \\
\hat U_{target} (\Delta_1) & = \hat \Id.
\end{split}
\label{eq:target_state}
\end{equation}
The set of target states is denoted $\Mc C_{\hat U}$. Note that any transformation generated by the control $\omega_x(t)$ is symmetric with respect to the sign of the offset, hence we assume $\Delta_1 >0$ without loss of generality. The generalization to more elaborated situations is briefly investigated in \ref{sec:localy robust pulse}. To quantify the robustness of a control field, we introduce the following function:
\begin{equation}
F(\Delta) = \Vert \hat U(\Delta,T,\omega_x) - \hat U_{target}(\Delta = 0) \Vert^2,
\label{eq:cost_function_2}
\end{equation}
where $\Vert\cdot\Vert$ is the Frobenius norm. The parametrization described here allows us to design selective or robust control fields. This idea is illustrated in Fig.~\ref{fig:example_fidelity}. From a qualitative point of view, the transformation is said to be selective if the curve around $\Delta=0$ is \emph{squeezed}, or equivalently if $\Delta_1$ is minimized (i.e. $\Delta_1 \rightarrow 0$) such that $F(\Delta_1)$ remains an extremum. Inversely, a robust transformation is achieved by taking a large offset $\Delta_1$ (i.e. $\Delta_1 \rightarrow \infty$), while keeping the curve as flat as possible around $\Delta=0$. Mathematically, a strong condition to generate a robust process is to nullify the first derivatives of the cost function $F$. However, we will see that optimal solutions could be determined without computing explicitly the different derivatives.
\begin{figure}[h]
\begin{center}
\includegraphics[width=\textwidth]{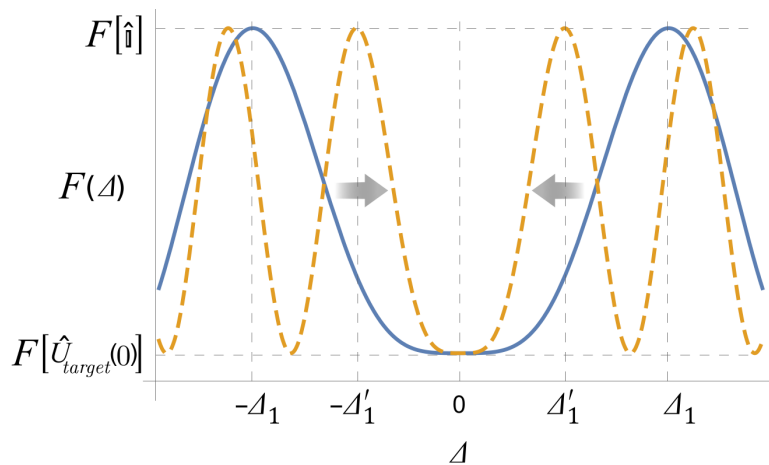}
\end{center}
\caption{(Color online) Schematic evolution of $F(\Delta)$ for two $SO(3)$- transformations with the conditions \eqref{eq:target_state} (blue line and orange dashed line). Horizontal gray dashed lines are used to indicate the target states. The notation $F[\hat U_f] = F(\Delta)$ with $\hat U(\Delta) = \hat U_f$ is introduced, in order to specify the transformation associated to the offset. The blue curve corresponds to a more robust control than the orange dashed one. To improve the selectivity, one has to "squeeze" the curve around $\Delta = 0$. This is symbolized by gray arrows.}
\label{fig:example_fidelity}
\end{figure}

A point to clarify is the use of the identity transformation as target state of the offset $\Delta_1$, instead of $e^{\phi \hat \epsilon_x}$. We emphasize that, for any control field, there exists $\Delta_1$ outside the interval of robustness that produces the identity transformation. The position of this offset depends non-trivially on the control field. The control problem is very difficult to solve with the usual approach, but computations are drastically simplified if we consider the condition $\hat U (\Delta_1) = \hat \Id$.

\subsection{Classification of optimal trajectories}
\label{sec:Pontryagin's Maximum Principle and the classification of optimal trajectories}

The Pontryagin Maximum Principle (PMP) describes the optimal control problem by using a Hamiltonian formalism~\cite{bonnard_optimal_2012,kirk_optimal_2004}. This approach allows us to derive a set of solutions which are candidates to optimality. In the case of a time-optimal problem, we define the following Pontryagin Hamiltonian:
\begin{equation}
H_p = \sum_{n=1}^N \braket{\hat P_n}{d_t \hat U_n }  = \bra{\hat P_n}\omega_x \hat \epsilon_x + \Delta_{(n)} \hat \epsilon_z \ket{\hat U_n}
\label{eq:H_p}
\end{equation}
where $\hat P_n$ are $\mathfrak{so}(3)$- matrices, called adjoint states of $\hat U_n$. We use the notation $\braket{A}{B}=\textrm{Tr}[A^\intercal B]$ for the matrix scalar product. We consider $N=2$ offsets, as described in section~\ref{sec:Definitions of robust/selective SO(3) transformations}, but this approach can be generalized to any set $\Mc C_\Delta = \{\Delta_{(0)},\Delta_{(1)},...,\Delta_{(N)} \}$. Examples with more than two offsets are considered in \ref{sec:localy robust pulse}. Equation~\eqref{eq:H_p} can be written as follows:
\begin{equation}
H_p=\sum_{n=1}^N [\omega_x l_n^x+\omega_y l_n^y+\Delta_{(n)} l_n^z],
\end{equation}
where $l_n^a=\langle \hat P_n | \hat \epsilon_a \hat U_n \rangle$, $a=x,y,z$. Introducing $l_a=\sum_n l_n^a$, $\vec l=(l_x,l_y,l_z)$ and $\vec{l}_n=(l_n^x,l_n^y,l_n^z)$, a compact expression of the Hamiltonian can be derived:
\begin{equation}
H_p=\omega_x l_x+\omega_y l_y+\sum_n \Delta_{(n)} l_n^z.
\end{equation}
The PMP states that the optimal trajectories satisfy Hamilton equations:
\begin{equation}
d_t \hat U_n (t) = \frac{\partial H_p }{ \partial \hat P_n(t)} ~~;~~ d_t \hat P_n (t) = - \frac{\partial H_p }{ \partial \hat U_n(t) },
\label{eq:Ham_eq}
\end{equation}
The efficiency of a control field is determined from the following cost functional:
\begin{equation}
C = \frac{1}{3N} \sum_{n=1}^N \Vert \hat U_n(T,\omega_x) - \hat U_{n,target} \Vert^2
\label{eq:cost_function}
\end{equation}
By construction, the optimal trajectory verifies $C=0$.
The application of the PMP leads to three possible types of trajectories~\cite{bonnard_optimal_2012,lapert_understanding_2013}:
\begin{enumerate}
\item Singular trajectories defined by the relation $\frac{\partial H_p}{\partial \omega_x} = 0$, and such that $|\omega_x(t)|\leq \omega_0$.
\item Regular trajectories for which $\omega_x = \omega_0\times  \text{sign}\left(\sum_{n=1}^2  \bra{\hat P_n}\hat \epsilon_x \ket{\hat U_n} \right)$.  The control field is a piecewise constant function which switches from $\pm \omega_0$ to $\mp \omega_0$ when the switching function  $l_x=\sum_{n=1}^2  \bra{\hat P_n}\hat \epsilon_x \ket{\hat U_n}$ is equal to 0.
\item Any concatenation of both solutions.
\end{enumerate}
A portion of the trajectory on an interval $I=[t_0,t_1]$ is called an arc. An arc is singular when the trajectory is singular for all $t \in I$. It is denoted with a $S$. Similarly, an arc is regular when the trajectory is regular for any $t \in I$. A regular arc for which $t_0$ and $t_1$ are two neighboring switching times is called \emph{a bang}, and it is denoted with a $B$. Concatenation of arcs is symbolized with a "-", as for example: $B-B$, $B-B-B$, $B-S$, etc.

We study below the different cases in order to select the corresponding optimal trajectory. This last step is performed in sections \ref{sec:time opt selective pulses} and \ref{sec:time opt robust pulses}, for respectively selective and robust transformations.\\
\noindent\textbf{Singular arcs:}\\
Singular arcs of this control problem are quite simple:
\begin{proposition}
In the case of two offsets $0$ and $\Delta_1$, singular arcs are given by constant controls of amplitude $|\omega_S| < \omega_0$.
\end{proposition}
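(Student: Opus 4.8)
The plan is to track the switching function $l_x$ and its successive time derivatives along a putative singular arc. First I would write the adjoint equation: evaluating $-\partial H_p/\partial\hat U_n$ and using $\hat\epsilon_a^\intercal=-\hat\epsilon_a$, one finds that each adjoint state obeys the same linear equation as the corresponding state, $d_t\hat P_n=(\omega_x\hat\epsilon_x+\Delta_{(n)}\hat\epsilon_z)\hat P_n$. Writing $M_n=\omega_x\hat\epsilon_x+\Delta_{(n)}\hat\epsilon_z$ and differentiating $l_n^a=\langle\hat P_n|\hat\epsilon_a\hat U_n\rangle$, the equations of motion together with $M_n^\intercal=-M_n$ give $d_t l_n^a=\langle\hat P_n|[\hat\epsilon_a,M_n]\hat U_n\rangle$; substituting the $\mathfrak{so}(3)$ commutation relations then yields the closed linear system $d_t l_n^x=\Delta_{(n)}l_n^y$, $d_t l_n^y=\omega_x l_n^z-\Delta_{(n)}l_n^x$, $d_t l_n^z=-\omega_x l_n^y$ (equivalently $d_t\vec l_n=(-\omega_x,0,-\Delta_{(n)})\times\vec l_n$, so $|\vec l_n|$ is conserved).

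Next I specialize to the two offsets $\Delta_{(0)}=0$ and $\Delta_{(1)}=\Delta_1>0$. For the in-resonance spin the first equation reads $d_t l_0^x=0$, so $l_0^x$ is a constant of the motion. On a singular arc the switching function vanishes identically, $l_x=l_0^x+l_1^x\equiv0$, so $l_1^x=-l_0^x$ is constant on the arc and all time derivatives of $l_x$ vanish there. From $d_t l_x=d_t l_1^x=\Delta_1 l_1^y=0$ with $\Delta_1\neq0$ we get $l_1^y\equiv0$; then $d_t l_1^z=-\omega_x l_1^y=0$, so $l_1^z$ is constant too (equivalently, $H_p=\omega_x l_x+\Delta_1 l_1^z$ is a first integral and $l_x\equiv0$ forces $l_1^z=H_p/\Delta_1$). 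Finally, the vanishing of the second derivative, $d_t^2 l_x=\Delta_1(\omega_x l_1^z-\Delta_1 l_1^x)=0$, determines the singular control $\omega_S=\Delta_1 l_1^x/l_1^z$, a quotient of two constants, hence constant on the arc. As an admissible value it satisfies $|\omega_S|\le\omega_0$, and $|\omega_S|=\omega_0$ is excluded because a constant control sitting on the boundary of the constraint set is a bang rather than a singular arc; thus $|\omega_S|<\omega_0$.

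The hard part will be only the degenerate branch $l_1^z=0$, where $d_t^2 l_x=0$ no longer pins down $\omega_x$. I would dispose of it by observing that $l_1^z\equiv0$ on the arc forces $d_t^2 l_x=-\Delta_1^2 l_1^x=0$, hence $l_1^x=0$, and together with $l_1^y\equiv0$ this gives $\vec l_1\equiv0$ on the arc; since $\vec l_1$ satisfies a homogeneous linear ODE it then vanishes identically, the second spin decouples from the optimality conditions, and the remaining one-spin zero-offset problem is degenerate (the relation $\partial H_p/\partial\omega_x\equiv0$ imposes no constraint on $\omega_x$), so no genuine singular extremal is produced and this case is discarded. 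Apart from this bookkeeping and the justification of the strict inequality, the argument is elementary: vanishing of $l_x$ makes $l_1^x$ constant, which makes $l_1^y$ vanish, which makes $l_1^z$ constant, so that $\omega_S=\Delta_1 l_1^x/l_1^z$ is a ratio of constants.
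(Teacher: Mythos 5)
Your proof is correct and takes essentially the same route as the paper: you impose $l_x=d_t l_x=d_t^2 l_x=0$ along the arc, use that $l_0^x$ (hence $l_1^x=-l_0^x$) and $l_1^z$ are constants of the motion, and conclude that $\omega_S=\Delta_1 l_1^x/l_1^z$ is a constant admissible control with $|\omega_S|<\omega_0$. Your explicit treatment of the degenerate branch $l_1^z=0$ (forcing $\vec l_1\equiv 0$ and $H_p=0$) is just a slightly more detailed version of the paper's parenthetical assumption that $H_P\neq 0$, and the overall sign difference between your lift dynamics and \eqref{eq:dl/dt} is immaterial to the conclusion.
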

\begin{proof}
First, we write \eqref{eq:H_p} as follows:
\[
H_P=\omega_x l_x+\Delta_1 l^z_1.
\]
Singular arcs $S$ cannot be directly derived from $H_P$ since they satisfy $l_x(t)=0$ in a non-zero time interval $[t_0,t_1]$. We deduce that $d_t{l_x}=d^2_t{l_x}=0$ in $[t_0,t_1]$. An explicit computation leads to:
\begin{equation}
\begin{cases}
l_x=l_0^x+l_1^x=0 \\
d_t{l_x}=-\Delta_1 l_1^y=0 \\
d^2_t{l_x}=-\Delta_1^2 l_1^x+\omega_S\Delta_1 l_1^z=0
\end{cases}
\label{eq:singular_conditions}
\end{equation}
where $\omega_S$ is a singular control field. Since $H_P=\Delta_1 l_1^z$ and $l_0^x$ are constants of the motion ($H_P$ is different from zero), we obtain that the singular control field is a constant function. An admissible field is obtained when the absolute value of the field amplitude is smaller than $\omega_0$.
\end{proof}
The trajectories generated by the singular field of amplitude $\omega_S$ and duration $T_S$, are given by the corresponding evolution operator:
\begin{equation}
\hat U(\Delta,T_S)=e^{T_S(\omega_S\hat \epsilon_x + \Delta \hat\epsilon_z)}=e^{T_S\sqrt{\omega_S^2 + \Delta^2}(n_x \epsilon_x + n_z \hat\epsilon_z)},
\end{equation}
where $\vec{n}=(n_x,n_z)$ is a unit vector, and $\Delta$ is an arbitrary offset. This transformation is a rotation of angle $\phi$ at resonance and the identity for the offset $\Delta_1$ if:
\begin{equation}
\begin{cases}
T_S\omega_S=\phi \\
T_S\sqrt{\omega_S^2+\Delta_1^2}=2k\pi,~k \in \setN.
\end{cases}
\end{equation}
The smallest offset solution of these equations is:
\begin{equation}
\Delta_1=\frac{\sqrt{4\pi^2-\phi^2}}{T_S}.
\label{eq:Delta_singular}
\end{equation}
As could be expected, the smaller the offset, the longer the control duration is. In the rest of the paper, we denote by $\Delta_0$ the offset associated with a constant regular control of amplitude $\omega_0$ and duration $T_0$.  This result is commonly used in NMR, and it can be heuristically deduced using Fourier transforms~\cite{bernstienbook}. From  a linear approximation of the Bloch equation~\eqref{eq:Bloch_equation} which is valid around the equilibrium point, we deduce that the control duration is the Fourier dual of the offset frequency. They are connected by the relation $\Delta  T = 2 \pi$. Hence, we expect that $\Delta \simeq 1/T$.\\
\noindent\textbf{Regular arcs:}\\
\label{subsec:regular arcs}
We now investigate in detail the structure of regular arcs. From Hamilton equations~\eqref{eq:Ham_eq}, we deduce the dynamics of the Hamiltonian lift $\vec{l}$ (which is a continuous function):
\begin{equation}
\label{eq:dl/dt}
d_t \left( \begin{array}{c}
l^x \\
l^y \\
l^z
\end{array} \right) = \left(\begin{array}{c}
-\Delta_1 l^y_1 \\
\Delta_1 l^x_1 - \omega_x l^z \\
\omega_x l^y
\end{array}  \right) ,
\end{equation}
where $\omega_x= \omega_0 \textrm{sign}(l^x)$. An example of regular control is displayed in figure~\ref{fig:structure_regular_control}.

\begin{figure}[h]
\centering
\includegraphics[scale=1]{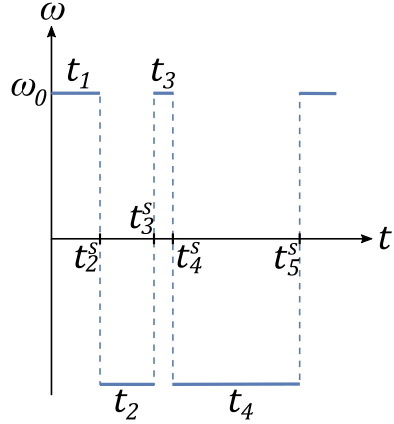}
\caption{Structure of a regular control field with several switchings.}
\label{fig:structure_regular_control}
\end{figure}
We introduce the switching times $t_1^s<t_2^s<\cdots <t_N^s$ of the control field and the bang duration $t_i = t_{i+1}^s - t_i^s$. Using \eqref{eq:dl/dt}, we arrive at:
\begin{equation}
l^y_1 (t_i^s +t) =    l^y_1(t_i^s) \cos(\Omega t)
+ \frac{l^x_1 (t_i^s) \Delta_1 - l^z_1(t_i^s) \omega_x }{\Omega}\sin(\Omega t),
\label{eq:solution_regular_pulse_ly1}
\end{equation}
with  $t\in [0,t_{i+1}^s-t_i^s]$, $\omega_x = \pm \omega_0$ and $\Omega = \sqrt{\omega_0^2 + \Delta_1 ^2}$. The integration of \eqref{eq:solution_regular_pulse_ly1} allows us to determine $l^x(t)$ as a function of $\vec l_1$:
\begin{equation}
\begin{split}
 l^x (t_i^s +t) =& l^x(t_i^s)- \frac{\Delta_1}{\Omega} [l^y_1(t_i^s) \sin(\Omega t) \\
&+ \frac{l^x_1 (t_i^s) \Delta_1 - l^z_1(t_i^s) \omega_x }{\Omega}(1-\cos(\Omega t))].
\end{split}
\label{l^x (t_i^s +t)}
\end{equation}
Note that a switching time $t_i^s$ is characterized by $l^x(t_i^s)=0$ and $l^y_1(t_i^s)\neq 0$. The general form of the control field as a function of the system state at time $t_i^s$ is deduced from \eqref{l^x (t_i^s +t)} by:
\begin{equation}
\omega_x(t_i^s+t) = \omega_0 \textrm{sign}[l^x(t_i^s+t)].
\label{eq:omega_x(t_i^s+t)}
\end{equation}
The control is entirely determined by $\vec l_n(t_i^s)$ at the switching time $t_i^s$, but a similar equation can be derived using the initial conditions $\vec l_n (t_0)$, $t_0$ arbitrary. In fact, it is not necessary to fix $\vec l_n$ at the initial time, we can fix the system state at any time, and integrate the dynamics forward or backward in time. It is also interesting to notice that the periodicity of \eqref{l^x (t_i^s +t)} implies the presence of past and future switchings, if there is at least one switching.  This parameterization is particularly useful to study concatenation of regular and singular arcs.

\noindent\textbf{Concatenation of regular and singular arcs}\\
We explore here the third set of extremal trajectories, i.e. the concatenation of regular and singular arcs.
\begin{proposition}
Let a regular arc $B$ defined at the switching time $t_i^s$. At the next switching time $t_{i+1}^s$, the system cannot switch to a singular arc $S$. Then, trajectories of the type $B-B-S-...$, $B-B-B-S-...$, etc, are not optimal.
\end{proposition}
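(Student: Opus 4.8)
The strategy is to compare the necessary conditions imposed by the PMP on either side of the candidate junction time $t_{i+1}^s$ between the bang $B$ and the singular arc $S$, and to obtain a contradiction from the continuity of the Hamiltonian lift $\vec{l}$. On the singular side, the proof of the previous proposition shows that $l^y_1$ vanishes identically on any singular arc --- this is exactly the relation $d_t l_x=-\Delta_1 l^y_1=0$ appearing in~\eqref{eq:singular_conditions} --- so that, by continuity, $l^y_1(t_{i+1}^s)=0$. It is therefore enough to prove that at the end of the bang immediately preceding $S$ one has $l^y_1(t_{i+1}^s)\neq 0$; this contradiction then forbids any $B-S$ junction in an extremal, and a fortiori rules out the families $B-B-S-\cdots$, $B-B-B-S-\cdots$, etc., as optimal candidates.

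To prove $l^y_1(t_{i+1}^s)\neq 0$ I would use the explicit regular flow. Since the trajectory has the form $B-B-S-\cdots$, the bang preceding $S$ starts at a genuine bang-bang switching time $t_i^s$, so that $l^x(t_i^s)=0$ and $l^y_1(t_i^s)\neq 0$ (at a bang-bang switch $l^x$ changes sign, hence $d_t l^x(t_i^s)=-\Delta_1 l^y_1(t_i^s)\neq 0$). On the open interval $(t_i^s,t_{i+1}^s)$ the control is a constant $\pm\omega_0$ and $l^x$ keeps a fixed sign, so $t_{i+1}^s$ is the first positive zero of $t\mapsto l^x(t_i^s+t)$ in~\eqref{l^x (t_i^s +t)}. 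Writing $a=l^y_1(t_i^s)$, $b=\big[\Delta_1 l^x_1(t_i^s)-\omega_x l^z_1(t_i^s)\big]/\Omega$ and $\theta=\Omega t/2$, the bracket in~\eqref{l^x (t_i^s +t)} (with $l^x(t_i^s)=0$) factorizes as $2\sin\theta\,(a\cos\theta+b\sin\theta)$, so at $t_{i+1}^s$ either $\sin\theta=0$ or $a\cos\theta+b\sin\theta=0$. Rewriting~\eqref{eq:solution_regular_pulse_ly1} as $l^y_1(t_i^s+t)=2\cos\theta\,(a\cos\theta+b\sin\theta)-a$ and substituting either alternative gives $l^y_1(t_{i+1}^s)=\pm a=\pm l^y_1(t_i^s)\neq 0$. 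Conceptually this merely reflects that $\vec{l}_1$ performs a rigid rotation about the axis $(\omega_x,0,\Delta_1)$ on every bang, so that $|l^y_1|$ is restored at consecutive switching times.

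The main obstacle is precisely this last step. One cannot simply quote the characterization ``$l^x=0$ and $l^y_1\neq 0$'' of a bang-bang switching time at $t_{i+1}^s$, because there $l^x$ does not change sign: it passes from a nonzero value on the bang to being identically zero on $S$, and one should a priori worry about a degenerate junction at which $l^y_1$ vanishes at the end of the bang (equivalently, at which $l^x$ has a higher-order zero). The explicit propagation~\eqref{eq:solution_regular_pulse_ly1}--\eqref{l^x (t_i^s +t)} along the bang is exactly what excludes this: it transports the nondegeneracy $l^y_1(t_i^s)\neq 0$ --- valid because $t_i^s$ is a bona fide bang-bang switch, with a bang on both sides --- to $t_{i+1}^s$. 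Once $l^y_1(t_{i+1}^s)\neq 0$ is established, it clashes with the singular requirement $l^y_1(t_{i+1}^s)=0$, which completes the argument.
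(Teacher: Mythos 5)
Your proof is correct and follows essentially the same route as the paper: both integrate the lift $\vec l_1$ explicitly along the bang starting from a bang-bang switching time (where $l^x=0$, $l^y_1\neq 0$) and show that at the next zero of the switching function $l^x$ the singular requirement $l^y_1=0$ cannot hold, which forbids the $B$-to-$S$ junction and hence $B-B-S-\cdots$ type extremals. Your factorization argument giving $l^y_1(t_{i+1}^s)=\pm l^y_1(t_i^s)$ is simply a more explicit rendering of the paper's statement that the switching-time equation~\eqref{eq:bang_duration} and the singular-set equation~\eqref{tsing} admit no joint solutions.
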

This proposition could also hold for $B-S$ cases if the initial time is chosen during the second bang (hence, the first bang and the beginning of the second one are defined on $t<0$). It follows that only a trajectory of the type $B-S$ is actually used in the dynamics of the system.
\begin{proof}
We assume the existence of at least one switching. We start from the knowledge of the state at the switching time $t_i^s$, and we examine if the system can switch to a singular arc at the next switching (if it does not occur at infinity). Note that we can have $t_i^s<0$. From equations of section~\ref{subsec:regular arcs}, we deduce that $t_i$ is solution of:
\begin{equation}
\sin (\Omega  t_i) + A (1- \cos (\Omega  t_i)) =0,
\end{equation}
with,
\begin{equation}
A = \frac{l^x_1 (t_i^s) \Delta_1 - l^z_1(t_i^s) \omega_x }{ l^y_1(t_i^s) \Omega}.
\end{equation}
This is the unique solution because $l_1^y(t_i)=0$ and $A=0$ correspond to singular trajectories. A straightforward calculation leads to:
\begin{equation}
 t_i = \min_{k>0}\left[\frac{2}{\Omega}\left(k\pi, k \pi - \arctan \left(\frac{1}{A} \right)  \right) \right], t_i>0.
\label{eq:bang_duration}
\end{equation}
A possible switching from $B$ to $S$ arcs can be described by this analysis. Indeed, the singular set is defined by $l_x=d_t{l_x}=0$. We deduce that a time $t$ on this set verifies $l_x(t)=l_1^y(t)=0$, that is:
$$
\cos(\Omega t)+A\sin(\Omega t)=0,
$$
which leads to:
\begin{equation}\label{tsing}
t=\frac{1}{\Omega}[k\pi -\arctan(\frac{1}{A})].
\end{equation}
Equations~\eqref{eq:bang_duration} and \eqref{tsing} do not have joint solutions. Due to the time-reversal symmetry, and the freedom to choose the initial state at a switching time with $t_i^s<0$, the analysis covers the cases $B-B-S$, $B-B-B-S$, etc. (see \eqref{l^x (t_i^s +t)} and the comments below).
\end{proof}

More cautions have to be made for $S-B$ trajectories. Inserting conditions~\eqref{eq:singular_conditions} into \eqref{l^x (t_i^s +t)}, we obtain a possible solution characterized by:
\begin{equation}
l^x(t_i^s + t) = -\frac{\Delta_1^2}{\Omega ^2} l_1^x (t_i^s) \left( 1 - \frac{\omega_x}{\omega_S}\right) (1 - \cos(\Omega t))
\label{eq:singular_to_regular_lx}
\end{equation}
The duration of this arc is given by $t = 2 \pi / \omega = T_0$. The optimality of this trajectory is discussed in section~\ref{sec:time opt robust pulses}.

\section{Time-optimal selective transformations}
\label{sec:time opt selective pulses}%

The goal of this section is to determine the time-optimal solution in the selective case. Selective and robust controls can be seen as two opposite properties, therefore we respectively refer to the selective and robust problems when $\Delta_1 < \Delta_0$ and $\Delta_1 \geq \Delta_0$. The reason that motivates this choice becomes clearer in the next paragraphs.

The aim of the time-optimal selective control problem is to determine the shortest control field that generates the target state~\eqref{eq:target_state} when $\Delta_1 < \Delta_0$.
From section~\ref{sec:Pontryagin's Maximum Principle and the classification of optimal trajectories}, we know that $B-...-B-S...$ trajectories are not optimal. We present below analytic and numerical results supporting the following conjecture:
\begin{conjecture}
Time-optimal selective transformations are given by singular trajectories of Pontryagin Hamiltonian.
\label{conj_selective}
\end{conjecture}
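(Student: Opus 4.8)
The plan is to prove the conjecture by first using the Pontryagin Maximum Principle together with the structural results of Section~\ref{sec1} to reduce the optimal candidates to a short list, and then to show that in the selective regime $\Delta_1<\Delta_0$ none of the competitors beats the singular arc. By the two propositions of Section~\ref{sec1}, every extremal is a pure singular arc $S$, a pure regular trajectory $B-B-\cdots-B$, or the mixed type $S-B$ (a singular arc followed by a single bang whose duration is fixed by \eqref{eq:singular_to_regular_lx}); trajectories $B-\cdots-B-S$ are already excluded. For the pure singular arc the target \eqref{eq:target_state} imposes $T_S\omega_S=\phi$ and $T_S\sqrt{\omega_S^2+\Delta_1^2}=2k\pi$, so $T_S=\sqrt{4k^2\pi^2-\phi^2}/\Delta_1$; the shortest admissible choice is $k=1$, i.e.\ \eqref{eq:Delta_singular}, the constraint $|\omega_S|<\omega_0$ is precisely $\Delta_1<\Delta_0$, and hence in the selective regime the singular arc is always available, with a duration that diverges as $\Delta_1\to 0$. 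This $T_S$ is the duration to beat.

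The key analytic input is a lower bound on $T$ valid for \emph{any} admissible control. Let $V(t)=\hat U(0,t)^{\intercal}\hat U(\Delta_1,t)$; a direct computation from $d_t\hat U(0,t)=\omega_x\hat\epsilon_x\hat U(0,t)$ and $d_t\hat U(\Delta_1,t)=(\omega_x\hat\epsilon_x+\Delta_1\hat\epsilon_z)\hat U(\Delta_1,t)$ gives $d_tV=\Delta_1\,\tilde\epsilon_z(t)\,V$ with $\tilde\epsilon_z(t)=\hat U(0,t)^{\intercal}\hat\epsilon_z\hat U(0,t)$ a unit $\mathfrak{so}(3)$-generator whose axis, since $\hat U(0,t)$ is a rotation about the $x$-axis (the generators $\omega_x\hat\epsilon_x$ commute), stays orthogonal to $\hat\epsilon_x$. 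Thus $V$ travels in the bi-invariant metric at speed exactly $\Delta_1$, and since the two target conditions force $V(T)=\hat U(0,T)^{\intercal}\hat U(\Delta_1,T)=e^{-\phi\hat\epsilon_x}$, whose rotation angle is $\phi$ (for $\phi\in(0,\pi]$), we obtain $T\geq\phi/\Delta_1$ — the same $\Delta_1^{-1}$ behaviour as $T_S$. For the $S-B$ family the total time is the duration of the singular piece plus the positive, fixed duration of the appended bang read off from \eqref{eq:singular_to_regular_lx}, and imposing both target conditions shows that it cannot be smaller than $T_S$. It therefore remains to rule out, in the selective regime, the genuinely regular trajectories $B-B-\cdots-B$ with at least one switching.

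This last case is the main obstacle: as soon as $\omega_x$ alternates, $\int_0^T|\omega_x|\,dt$ may exceed $\phi$, the crude estimates above no longer bridge the gap between $\phi/\Delta_1$ and $T_S$, and the bang dynamics \eqref{eq:solution_regular_pulse_ly1} carries no conserved quantity delivering the missing factor. My plan is (i) to sharpen the bound $T\geq\phi/\Delta_1$ by exploiting that the precessing generator $\tilde\epsilon_z(t)$ is always orthogonal to $\hat\epsilon_x$ whereas the required endpoint $e^{-\phi\hat\epsilon_x}$ is a rotation about $\hat\epsilon_x$, forcing the curve $V$ to accumulate strictly more length than $\phi$, with an excess controlled by the switchings; (ii) in the deep selective regime $\Delta_1/\omega_0\ll 1$, to treat $\hat U(\Delta_1,\cdot)$ as a controlled $\Delta_1$-perturbation of a closed curve assembled from great-circle arcs and to show that its closure condition, expanded in powers of $\Delta_1$, costs strictly more time whenever switchings are present; and (iii) to solve numerically the finite-dimensional shooting problem parametrized by the direction of $\vec l_1$ at a switching time — a two-parameter family for each number of bangs, via \eqref{eq:solution_regular_pulse_ly1} and \eqref{eq:bang_duration} — for $n=1,2,3,\dots$ switchings over the relevant interval of $\Delta_1$, checking that the minimal-time bang-bang solution always exceeds $T_S$. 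Together with the reduction and the estimates of the previous paragraph, (i)--(iii) constitute the evidence supporting Conjecture~\ref{conj_selective}.
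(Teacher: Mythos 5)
Your reduction step and your lower bound are fine as far as they go: the identity $d_tV=\Delta_1\,\tilde\epsilon_z(t)V$ for $V(t)=\hat U(0,t)^{\intercal}\hat U(\Delta_1,t)$ is correct, $V(T)=e^{-\phi\hat\epsilon_x}$ does follow from \eqref{eq:target_state}, and the resulting bound $T\geq \phi/\Delta_1$ is valid for every admissible control. But this is strictly weaker than the singular time $T_S=\sqrt{4\pi^2-\phi^2}/\Delta_1$ of \eqref{eq:Delta_singular}, and you say so yourself: the case that actually decides the conjecture --- purely regular (bang-bang) trajectories with one or more switchings in the regime $\Delta_1<\Delta_0$ --- is not treated. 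Items (i)--(iii) of your plan are a program, not an argument: (i) is only a hope that the orthogonality of $\tilde\epsilon_z(t)$ to $\hat\epsilon_x$ produces the missing excess length, (ii) is an unexecuted perturbative expansion, and (iii) is a numerical study you have not run. Likewise your claim that an $S-B$ concatenation ``cannot be smaller than $T_S$'' is asserted, not derived; in the paper this requires an explicit computation (the compatibility conditions $t_1\sqrt{\omega_S^2+\Delta_1^2}=2k\pi$, $t_2\sqrt{\omega_0^2+\Delta_1^2}=2n\pi$ combined with the resonance condition, which force the degenerate cases $k=0$ or $n=0$). So, measured even against the evidentiary standard the paper sets for this conjecture, the proposal has a genuine gap: the central exclusion of switching bang-bang solutions is missing.

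For comparison, the paper's route is to fix the number of switchings and use the quaternion composition formula of the appendix together with the resonance constraint $t_+-t_-=T_0=\phi/\omega_0$ (all bangs commute at $\Delta=0$): for $B-B$, $B-B-B$ and $B-B-B-B$ trajectories the requirement $\hat U(\Delta_1,T)=\hat\Id$ always collapses to $T_0\sqrt{\omega_0^2+\Delta_1^2}=2k\pi$, i.e.\ $\Delta_1\geq\Delta_0$, so no regular solution with up to three switchings exists in the selective regime; $S-B$ and $B-S$ are excluded by the computation sketched above; a Trotter-limit argument shows that regular trajectories with many switchings can only reproduce, not beat, the singular solution; and GRAPE scans over $(T,\Delta_1)$ and $(T,\phi)$ confirm that the minimum of the cost sits on the singular curve \eqref{eq:Delta_singular}. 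Your $V(t)$ construction is a genuinely different and potentially useful ingredient --- it gives a control-independent bound the paper does not state --- but to support the conjecture you would still need to carry out the finite-switching exclusions (or sharpen the bound from $\phi/\Delta_1$ to $\sqrt{4\pi^2-\phi^2}/\Delta_1$), which is precisely the part left undone.
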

%

\subsection{Analytic computations}
\label{sec:BB}

\begin{figure}[h]
\includegraphics[width=\textwidth]{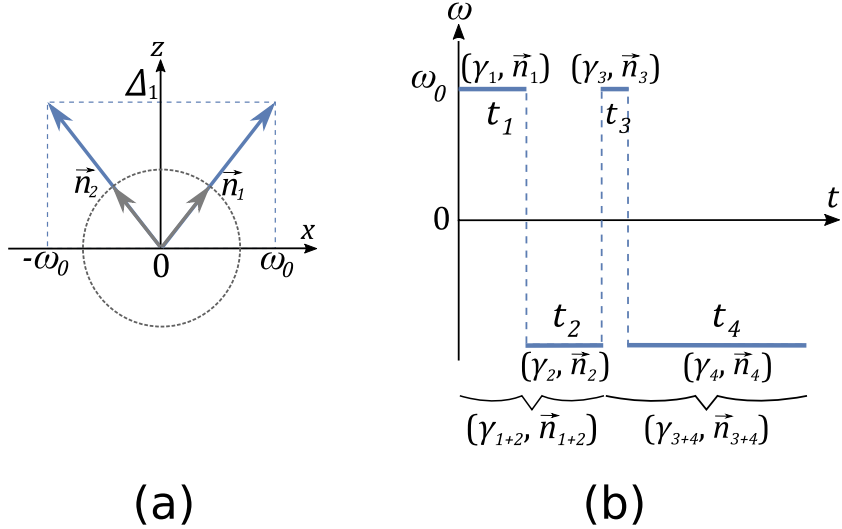}
\caption{(a) Two rotation axes involved in a regular control. (b) Example of a $B-B-B-B$ trajectory. Each bang is characterized by an angle $\gamma_i$ and a vector $\vec n_i$. The product of consecutive bangs gives effective angles $\gamma_{i+j+..}$ and effective vectors $\vec n_{i+j+...}$. In this example, $t_+ = t_1 + t_3$ and $t_- = t_2 + t_4$.}\label{fig:fig_proof}
\end{figure}

We show here the non-optimality of regular controls when the number of switchings is fixed. Since the number of possible cases to consider is infinite, we consider only the cases with one, two, and three switchings (i.e. $B-B$, $B-B-B$ and $B-B-B-B$ trajectories). We also prove the non-optimality of $S-B$ and $B-S$ controls, and the fact that a singular control can be asymptotically approximated by a regular solution with an infinite number of switchings. On the basis of these different results, we conjecture that singular trajectories are the time-optimal solutions (conjecture \ref{conj_selective}).

In the following, we use extensively the result established in \ref{sec:product of rotation}. We consider a regular trajectory of duration $T_S$, which has the same duration as a singular trajectory of amplitude $|\omega_S| < \omega_0$. Each bang is associated with a rotation angle $\gamma_i = t_i \sqrt{\omega_0^2 + \Delta_1 ^2}$ around an axis $\vec n_i$ (see figure~\ref{fig:fig_proof}). We introduce the notation $\gamma_{i+j+k+...}$ and $\vec n _{i+j+k+...}$ to denote effective rotation angles and rotation axes produced by the composition of several bangs, $i,j,k,...$. We also introduce the cumulative duration of bangs with positive and negative amplitudes, denoted respectively by $t_+$ and $t_-$. We have $t_+ + t_- = T_S$ and $t_+ - t_- = T_0$.

\noindent\textbf{The one-switching case:} Following \ref{sec:product of rotation}, and using the fact that the identity operator is parameterized by the unit quaternion, we deduce that $\gamma_{1+2} = 2 m \pi$ only if $\frac{\gamma_1 - \gamma_2}{2}= k \pi $ and $\frac{\gamma_1 + \gamma_2}{2}= n \pi$. These are the only solutions because $\vec n_1$ and $\vec n_2$ are not collinear (see figure~\ref{fig:fig_proof}). We arrive at:
\begin{align}
\label{eq:bang-bang-constr_1}
(t_1 - t_2) \sqrt{\omega_0^2 + \Delta_1^2} &= 2 k \pi ~,~k \geq 1 \\
\label{eq:bang-bang-constr_2}
(t_1 + t_2) \sqrt{\omega_0^2 + \Delta_1^2} &= 2 n \pi ~,~n \geq 1 .
\end{align}
Inserting the constraint $t_1-t_2 = T_0$ into \eqref{eq:bang-bang-constr_1} leads to:
\begin{equation}\label{eq_offset}
T_0 \sqrt{\omega_0^2 + \Delta_1^2} = 2 k \pi
\end{equation}
The smallest offset solution of \eqref{eq_offset} is by definition $\Delta_0$. Consequently, there is no offset $\Delta_1<\Delta_0$ solution of the control problem with a $B-B$ control of duration $T_S<T_0$.

\noindent\textbf{The two-switching case:} This computation is  more involved since the first two bangs can generate a rotation about an axis collinear to the last rotation axis. More precisely, we have to consider the following solutions:
\begin{align}
\vec n_{1+2} & \neq \pm \vec n_3 \\
\vec n_{1+2} & = \pm \vec n_3 .
\end{align}
In the first case, we have:
\begin{equation}
\left\lbrace \begin{array}{cc}
\gamma_{1+2}+\gamma_3 & = 2 k \pi \\
\gamma_{1+2}-\gamma_3 & =2 n \pi
\end{array} \right.
~ \Rightarrow ~
\left\lbrace \begin{array}{cc}
\gamma_{1+2} & = 2 k' \pi \\
\gamma_3 & =2 n' \pi
\end{array} \right.
\end{equation}
The condition $\gamma_{1+2} = 2 k' \pi$ is similar to the one-switching case. Then we can add \eqref{eq:bang-bang-constr_1} and $\gamma_3 = 2 n' \pi$ in order to obtain:
\begin{equation}
\label{eq:bang-bang-bang_constr_1}
\begin{split}
(t_1-t_2+t_3)\sqrt{\omega_0^2 + \Delta_1^2} &= 2 (k'+n')\pi \\
\Rightarrow T_0 \sqrt{\omega_0^2 + \Delta_1^2} &= 2 l \pi .
\end{split}
\end{equation}
The smallest offset is obtained for $l=1$, which corresponds to $\Delta_0$. It remains to analyze the situation in which the rotation axes are collinear. An explicit computation of the product of evolution operators gives:
\begin{eqnarray*}
\textrm{Tr}[\hat U_{1+2+3}]&=&\frac{1}{4} [8 \cos (2 \theta ) \sin^2\left(\frac{\gamma_2}{2}\right)  \sin ^2\left(\frac{\gamma_1+\gamma_3}{2}\right) \\
& &-8 \cos (\theta )\sin (\gamma_2)  \sin (\gamma_1+\gamma_3) \\
& & +3 \cos (\gamma_1-\gamma_2+\gamma_3)+3 \cos (\gamma_1+\gamma_2+\gamma_3) \\
& &  +2 \cos (\gamma_1+\gamma_3)+2 \cos (\gamma_2)+2]
\end{eqnarray*}
A necessary condition to obtain $\textrm{Tr}[\hat U_{1+2+3}]=3$ is: $\gamma_2 = 2k\pi$. The same result can be deduced by computing the real part of the product of three unit quaternions:

\begin{equation*}
\begin{split}
Re[\mathbf{q}_{1+2+3}] =& \cos (\gamma_1) \left(\cos (\gamma_2) \cos (\gamma_3)-\sin \left(\frac{\gamma_2}{2}\right) \sin \left(\frac{\gamma_3}{2}\right) \cos (\theta )\right) \\
& -\sin \left(\frac{\gamma_1}{2}\right) \left(\sin
   \left(\frac{\gamma_2}{2}\right) \cos (\gamma_3) \cos (\theta )+\cos (\gamma_2) \sin \left(\frac{\gamma_3}{2}\right)\right)
\end{split}
\end{equation*}
The composition rules of the two other angles give: $\gamma_1 + \gamma_3 = 2 n \pi$. We can now use $T_0=t_+-t_-$ to determine:
\begin{equation}
\begin{split}
(t_\pm - t_\mp)\sqrt{\omega_0^2 + \Delta_1^2} &= \pm 2 (l-m)\pi \\
\Rightarrow T_0 \sqrt{\omega_0^2 + \Delta_1^2} &= 2 n \pi \\
\Rightarrow \min \Delta_1 = \Delta_0 &
\end{split}
\end{equation}
\noindent\textbf{The three-switching case:} We decompose the control into two different parts and we study the rotations 1+2 and 3+4 as single blocks. The conditions to produce the identity are:
\begin{equation}
\vec n_{1+2} \neq \pm \vec n_{3+4}  \Rightarrow \left\lbrace \begin{array}{c}
\gamma_{1+2} = 2 k \pi \\
\gamma_{3+4} = 2 n \pi
\end{array} \right.
\end{equation}
\begin{equation}
\vec n_{1+2} = \pm \vec n_{3+4}  \Rightarrow \gamma_{1+2} + \gamma_{3+4} = 2m\pi .
\end{equation}
For the first case, we proceed as in \eqref{eq:bang-bang-bang_constr_1}, and we prove that there is no solution. Therefore, as for the two-switching case, the only non-trivial situation corresponds to collinear axes. From~\ref{sec:product of rotation}, we deduce that two products of two rotations generate an effective rotation around the same axis only if the angles are equal modulo $2k\pi$ (up to a sign, but here, all angles are positive). Therefore, we arrive at:
\begin{equation}
\gamma_{1} = \gamma_3 ~;~\gamma_2 = \gamma_4
\end{equation}
and
\begin{equation}
\gamma_{1+2} = \gamma_{3+4} = \pi .
\end{equation}
This is possible only if:
\begin{align}
\gamma_1 + \gamma_2 &= (2k+1)\pi \\
\gamma_1 - \gamma_2 &= (2n+1)\pi ,
\end{align}
and we deduce that:
\begin{equation}
(t_+ - t_-) \sqrt{\omega_0^2 + \Delta_1^2} = 2 m \pi .
\end{equation}
Finally, we recover the same result as in the other cases.

\noindent\textbf{$S-B$ trajectory:} We proceed similarly as in the previous situations, except that we fix the offset, and we compare the control duration $T_S$ of a singular control with the duration $T = t_1 + t_2$ of a $S-B$ control. Here, $t_1$ and $t_2$ are the respective durations of the singular and regular arcs. We set $T_S = \omega_S \phi$, $T_S\sqrt{\omega_S^2 + \Delta_1^2} = 2 \pi$ to characterize the singular trajectory, and $t_1 \omega_S + t_2 \omega_0 = \phi$, $\gamma_{1+2} = 2\pi$ for the regular one. Using $t_1 \omega_S + t_2 \omega_0 = T_S\omega_S$, we obtain that $T<T_S$, because $\omega_S < \omega_0$, but the second constraint has to be taken into account. Rotation axes associated with each part of the $S-B$ control are not collinear, and as usual $\gamma_1 \pm \gamma_2 = 2 k_\pm \pi$. This equation leads to:
\begin{align}
t_1 \sqrt{\omega_S ^2 + \Delta_1^2} &= 2 k \pi \\
t_2 \sqrt{\omega_0 ^2 + \Delta_1^2} &= 2 n \pi.
\end{align}
Inserting these equations into the condition at resonance gives:
\begin{align}
& k \pi \frac{\omega_S}{\sqrt{\omega_S ^2 + \Delta_1^2}} + n \pi \frac{\omega_0}{\sqrt{\omega_0 ^2 + \Delta_1^2}} = 2 \pi \frac{\omega_S}{\sqrt{\omega_S ^2 + \Delta_1^2}} \\
\Rightarrow & \frac{2 - k}{n} = \sqrt{\frac{\omega_0^2}{\omega_S^2}. \frac{\omega_S^2 + \Delta_1^2}{\omega_0^2 + \Delta_1^2}}
\end{align}
We have $ \sqrt{\frac{\omega_0^2}{\omega_S^2}. \frac{\omega_S^2 + \Delta_1^2}{\omega_0^2 + \Delta_1^2}} >1$ because $\omega_0 > \omega_S$, hence the only possible solutions are $k=0$ or $(n=0, k = 2)$. The first case is non-physical while the second corresponds to a single singular arc. These calculations are extended easily to the case $B-S$ by permuting the definition of $t_1$ and $t_2$. Therefore, the concatenation of a singular arc with a regular one is not optimal. 

We can also continue calculations with $B-S-B$ trajectories. They are optimal in some selective state-to-state transfers of spin systems \cite{van_damme_time-optimal_2018}. For conciseness, we do not present this case in this paper, but we obtain a similar conclusion as $S-B$ and $B-B-...$ trajectories. We finally obtain that all admissible trajectories are longer than the singular one.

\noindent\textbf{Continuum limit:} The computations with more than three switchings become quickly arduous due to the large amount of possible new cases to consider. However, it seems that we always find the same conditions, giving $\Delta_0$ as the smallest offset to realize the identity transformation with regular arcs. Another argument supporting  this conjecture is provided by an analysis of the continuum limit. This result is known in the context of average Hamiltonian theory~\cite{brinkmann_introduction_2018}. We assume that the interval $[0,T_S]$ is divided in $M$ equal parts. We have:
$$
\mathbb{T}e^{\int_{\sigma_i} dt(\omega_x(t) \hat \epsilon_x + \Delta_1 \hat \epsilon_z)}= e^{T_S\bar \omega_x \hat \epsilon_x / M + o(T_S^2/M^2)} e^{T_S\Delta_1 \hat \epsilon_z / M + o(T_S^2/M^2)},
$$
with $\bar \omega_x$ the average of the control field in the interval $\sigma_i$ and $\mathbb{T}$ the time ordering operator. In the limit $M \rightarrow \infty$, we can use the Trotter formula~\cite{suzuki_generalized_1976}, which gives $\hat U (T_S) = e^{T_S(\omega_S \hat \epsilon_x + \Delta_1 \hat \epsilon_z)}$. This is the solution given by singular controls. A regular trajectory with a large number of switchings can therefore approximate a singular trajectory.

\subsection{Numerical computations}
\label{sec:numerical verification}

\begin{figure}[t]
\includegraphics[width=\textwidth]{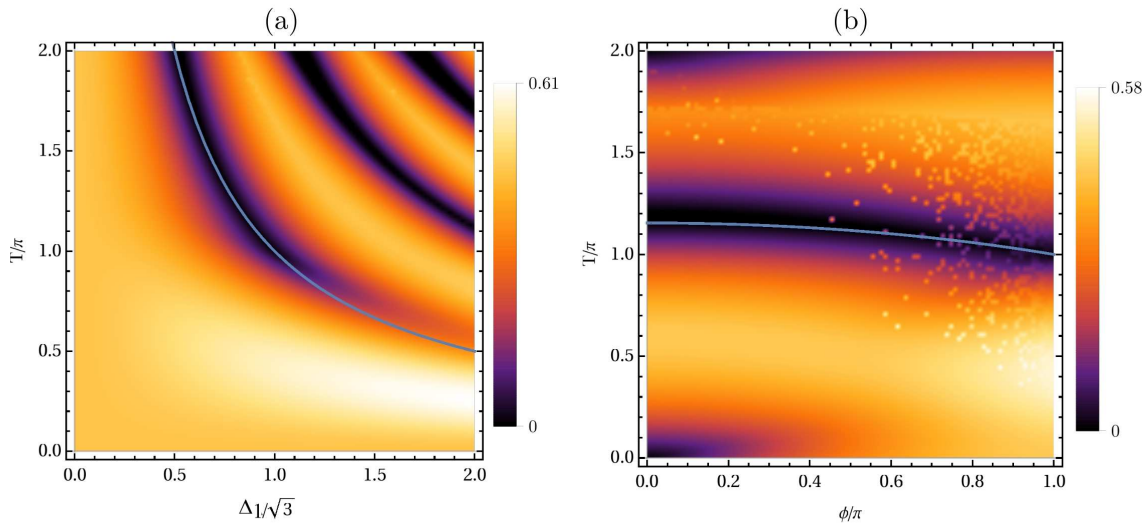}
\caption{(Color online) Panel (a): Plot of the cost $C$ as a function of the control duration $T$ and the offset $\Delta_1$, for a selective rotation of angle $\phi = \pi$ and $\omega_0=1$. Panel (b): Plot of $C$ as a function of $T$ and the angle of rotation $\phi $. We set $\omega_0=1$ and $\Delta_1 = \sqrt{3}$. In the two cases, each point of the contour plot corresponds to a numerical optimization (see the text for details). Blue solid lines are defined by the equation \eqref{eq:Delta_singular}.}
\label{fig:landscape_shooting_GRAPE}
\end{figure}

We perform numerical simulations in order to verify conjecture~\ref{conj_selective}. The goal is to solve the problem by "brute force" minimization of $C$ (see \eqref{eq:cost_function}) with the GRAPE algorithm~\cite{KHANEJA2005}. Here, there is no assumption about the structure of the control field, which is not constrained to be singular or regular. It is a piecewise constant function such that $|\omega_x (t)| \leq \omega_0 $. GRAPE is an iterative algorithm which improves step by step a test solution using gradient descent. By construction, such control fields do not verify $C=0$, i.e. they do not reach the target state exactly, but a state as close as possible to the target state. Transversality conditions at the final time (which is fixed) are taken into account in the numerical algorithm. Then, the minimum time of the process is given by the shortest time for which the target state is reached with a sufficiently small precision.

In order to highlight the relation between control duration and offset selectivity, several optimizations at fixed final time have been performed for different offset values. Results are presented in figure~\ref{fig:landscape_shooting_GRAPE}(a). We observe that the curve of equation $T(\Delta_1)= \frac{1}{\Delta_1}\sqrt{4\pi^2 - \phi^2}$ defines the global minimum of  $C$, and that a better selectivity cannot be achieved. This curve characterizes singular solutions (see \eqref{eq:Delta_singular}). For $\Delta_1 > \sqrt{3}$ (the value of $\Delta_0$ in figure~\ref{fig:landscape_shooting_GRAPE}), the optimization algorithm does not find good solutions around the singular trajectory. This is due to the limited amplitude of the control field. Longer controls with more complicated structures are required, as emphasized by the other black areas of figure~\ref{fig:landscape_shooting_GRAPE}(a). For these controls, the numerical solution is close to a regular control (bang-bang structure).

The general character of these results is verified by several optimizations with different rotation angles and control duration (the offsets are fixed). The corresponding results are presented in figure~\ref{fig:landscape_shooting_GRAPE}(b), in which  we also observe the optimality of singular constant controls. These numerical simulations are in agreement with the previous analytic results, and they support our conjecture.

\section{Time-optimal robust transformations}
\label{sec:time opt robust pulses}

In this section, we focus on the analytic computation of time-optimal robust controls. We also compare these results with some published solutions~\cite{vandamme2017a}. The goal of the time-optimal robust control problem is to determine the shortest control field, which generates the target states~\eqref{eq:target_state} when $\Delta_1 \geq \Delta_0$, with the smallest variations of $F(\Delta)$ near the offset $\Delta = 0$: $\frac{\partial^n F}{\partial \Delta ^n}\vert_{\Delta = 0} = 0, n=1,2,...$. Since there is no singular control associated to $\Delta_1 \geq \Delta_0$, we consider only regular trajectories.

The general expression for the evolution operator with a regular control is:
\begin{equation}
\hat U = \mathbb{T} \prod_{j=1}^{N_p} e^{(\omega_j \hat \epsilon_x + \Delta \hat \epsilon_z)t_j },
\end{equation}
where $\omega_j = \pm \omega_0$. We determine precisely the admissible set of values $t_j$ generating the target states for a fixed number of switchings. For that purpose, we use the computation already performed in section~\ref{sec:BB}. We recall the results here. We have respectively:
\begin{align}
(t_1 - t_2) \sqrt{\omega_0^2 + \Delta_1^{2}}&=2 k \pi & k\geq 1 \\
(t_1 + t_2) \sqrt{\omega_0^2 + \Delta_1^2}&=2 n \pi & n\geq 1,
\end{align}
for the one-switching case, and
\begin{equation}
(t_1 - t_2 + t_3) \sqrt{\omega_0^2 + \Delta_1^2}=2 k \pi ~;~ k\geq 1
\end{equation}
or
\begin{align}
t_2 \sqrt{\omega_0^2 + \Delta_1^2}&=2 n \pi & n\geq 1 \\
(t_1 + t_3) \sqrt{\omega_0^2 + \Delta_1^2}&=2 l \pi & l\geq 1,
\end{align}
for two switchings. For more switchings, we refer to section~\ref{sec:BB}.
By inserting $\sum_{j=1}^{N_p} t_j \omega_j = \phi$ in the conditions above, we determine the following structure for optimal regular controls. In the one-switching case, we have:
\begin{equation}
t_1 = t_2 + \frac{\phi}{\omega_0} ~ ; ~ t_2 = \frac{\phi}{2 \omega_0}\left(\frac{n}{k}-1\right)
\end{equation}
and for two switchings:
\begin{equation}
t_1 + t_3 = \frac{\phi}{\omega_0}\left(\frac{n}{k}-1\right) ~;~ t_1 = \alpha t_3 ~;~ t_2 =  (t_1+ t_3)+ \phi/\omega_0
\label{eq:param:opt_robust_2_switchings}
\end{equation}
Other solutions exist in the two-switching case, but they cannot be time-optimal, because of their long duration. The condition $\omega_x(0) = - \omega_0$ is used to determine \eqref{eq:param:opt_robust_2_switchings}.

To summarize, in the first situation, the optimization consists of finding two integers, $(n,k)$ and in the second case, the goal is to determine two integers, and one real number $\alpha$. By plotting the cost function for the first values of $(n,k,\alpha)$, we deduce which solution is optimal, as illustrated in figure~\ref{fig:compilation_surface}. As an example, for $\phi =\pi/2$, the solution $n=k=1$, $\alpha=1$ is the most robust one, while the solution $n=3$ and $k=2$, $\alpha = 1$ is better for $\phi =\pi$. We could also compute the derivatives of $F(\Delta)$ near $\Delta=0$, but generally, a simple observation is sufficient to find a solution.

\begin{figure}[h]
\includegraphics[width=\textwidth]{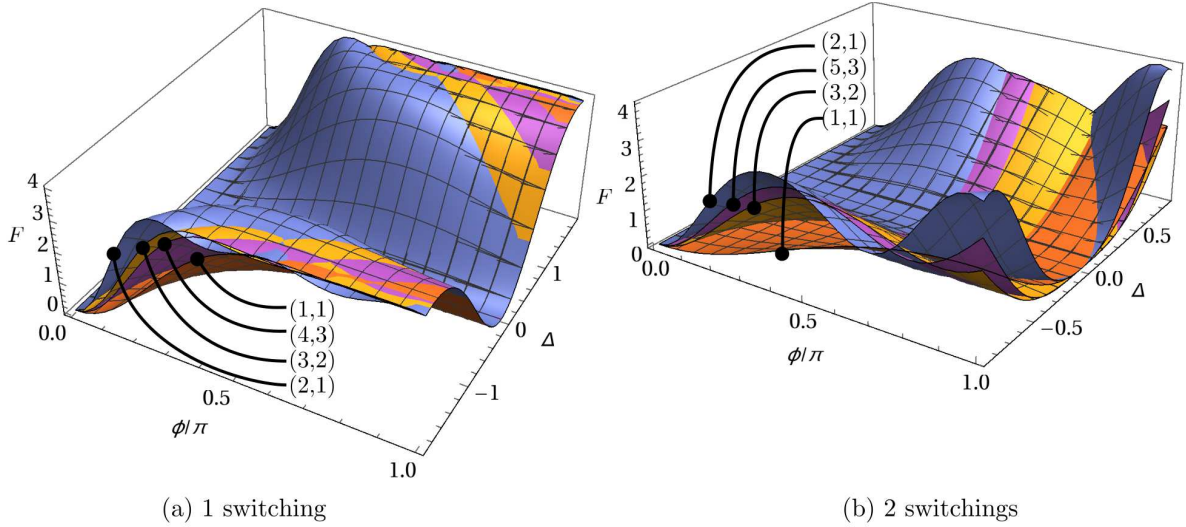}
\caption{(Color online) Plot of the cost $F$ for one (panel (a)) and two switchings (panel (b)) as a function of $\phi$ and $\Delta$ for different values of $(n,k)$. In the two-switching case, we set $\alpha=1$.}
\label{fig:compilation_surface}
\end{figure}

The computation of derivatives is generally necessary for the evaluation of the optimal value of $\alpha$. Further analytic calculations are possible if $\phi$ is fixed. Here, we investigate the case $\phi = \pi$. The optimal value is computed with the solution of $\frac{\partial^2 F}{\partial \Delta^2}(0)$ (the first derivative is always zero). A straightforward (but lengthy) computation gives:
\begin{align}
\frac{\partial^2 F}{\partial \Delta^2}(0) &= \left[ \frac{\partial^2 }{\partial \Delta^2} \Vert e^{-\pi \hat \epsilon_x} - \hat U(\Delta) \Vert ^2\right]_{\Delta = 0} \\
&=16 (3 - 2 \cos(\omega_0 t_1) - 2 \cos(\omega_0 t_3) + 2 \cos(\omega_0 (t1 + t3))) \\
&= 0
\end{align}
where the relations $t_2 = +(t_1+t_3) + \phi/\omega_0$, and $\omega_1 = - \omega_0$ have been used. This solution is the one that produces the smallest control times, required for time optimality.  To proceed further, we make the following change of variables $t_1 = t-a$, $t_3=t+a$, and we have:
\begin{equation}
3 - 4 \cos(\omega_0 a) \cos(\omega_0 t) + 2 \cos(2 \omega_0 t) = 0
\end{equation}
\begin{equation}
\Rightarrow \omega_0 a = \pm \arccos\left( \frac{3 + 2 \cos(2 \omega_0 t) }{4 \cos(\omega_0 t)} \right) + 2 m \pi ~ ;~ m \in \setZ.
\label{eq:solution_of_a}
\end{equation}
We obtain $\omega_0 \alpha \in \setC$, except for particular points where it is real. The real values are: $\omega_0 a = 2 m' \pi$ with $m' \in \setZ$. Then, the solutions with the shortest control durations are given by $a=0$, or equivalently $\alpha = 1 \Rightarrow t_1=t_3$ . This symmetry has been observed in many different studies~\cite{KOBZAR2004,KOBZAR2008,KOBZAR2012,vandamme2017a}.

Robust $\pi$-pulses have been studied extensively~\cite{cat}. Here we compare our approach with the solutions of \cite{vandamme2017a}. We consider two optimal solutions of this study. The first one consists in a robust state-to-state transfer from the north pole of the Bloch sphere to the south pole, while the second one is a robust $SO(3)$-transformation of angle $\phi=\pi$. In order to compare the results, we use $\omega_0 = 1$. In the first case, the optimal solution is given by a $B-B$ control. In figure~\ref{fig:comparison_leo_2_switchings_compil}, we observe that the robustness is not improved by such a control. However this concerns the full rotation matrix. If we restrict the transformation to the $z$- axis only, we observe an enhancement of the robustness. The optimal solution is given by $k=1$ and $n=2$, which gives $t_1 = 3 \pi/2$ and $t_2 = \pi /2$, in agreement with the results obtained in \cite{vandamme2017a}. For the robust $\pi$-pulse, we have to consider the two-switching case. We found the optimal values $(n=5,k=3,\alpha=1)$, which correspond to the result obtained numerically in \cite{vandamme2017a}. Note that other similar solutions can be derived from this approach. In particular, we point out the solution $(n=3,k=2,\alpha=1)$ which is less robust around $\Delta = 0$, but more robust in average. This new control field is shorter than the reference one ($2\pi$ instead of $2.34\pi$). The fidelity of some solutions, and their corresponding control fields are displayed in figure~\ref{fig:comparison_leo_2_switchings_compil}.

The three-switching case has also been studied, and no better solution has been found. With four switchings and more, the number of cases becomes larger and additional constraints have to be accounted for in practical computations. A natural way to introduce such constraints is to incorporate other offsets in the optimization process.
For example, one could consider: $\Mc C_\Delta = \{ 0,\Delta_1/2,\Delta_1 \}$ or $\Mc C_\Delta = \{ 0,\Delta_1/3,2\Delta_1/3,\Delta_1 \}$,...
Note that a way to improve locally the robustness is to consider a selective control with a local robustness, as the one considered in \ref{sec:localy robust pulse}. The idea is to consider an offset $\Delta_2<\Delta_1$ different from $0$ that produces the rotation of angle $\phi$. If $\Delta_2$ is close to zero, the robustness on the interval $[-\Delta_1,\Delta_1]$ is very good, even if the rotation is not exact at resonance. With this system, computation can be performed in the two-switching case.
\begin{figure}
\includegraphics[width=\textwidth]{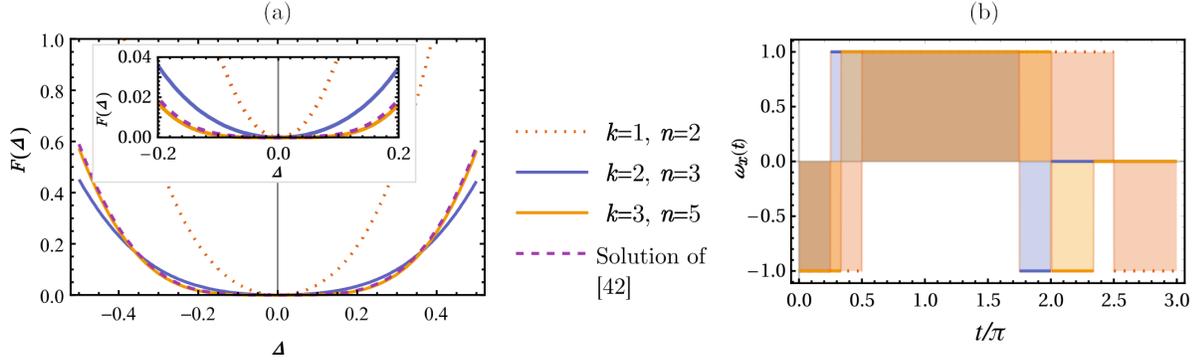}
\caption{(Color online) (a) Fidelity function $F$ for a $\pi$-pulse for different values of $(n,k)$, and for the solution of \cite{vandamme2017a}. The inset shows $F$ near $\Delta =0$. (b) Control field associated with each solution. Control times are respectively: $3\pi$, $2\pi$ , $2.34\pi$, $2.34\pi$ (top to bottom in the legend).}
\label{fig:comparison_leo_2_switchings_compil}
\end{figure}

\section{Conclusion}
\label{sec:conclusion}
We have presented different tools for the design of selective and robust $SO(3)$-transformations. Our computations are based on a model of two spins with different offsets. The first offset (at resonance) is supposed to realize the desired $SO(3)$-transformation, while the second offset produces the identity transformation. This allows us to derive a list of constraints on the control field that reduces considerably the number of variables required to parameterize the optimal solution. We prove different results supporting the optimality of singular trajectories in the selective case. Hence, we can conjecture the existence of a link between singular trajectories and selective controls and between regular trajectories and robust controls. The change of behavior between the two control problems occurs for $\Delta_1=\Delta_0$. This corresponds to the offset value for which the set of singular trajectories reaches the set of regular trajectories. This point is summarized in figure~\ref{fig:selective_robust_areas}.

We have found that time-optimal selective controls are given by constant controls of amplitude $|\omega_S |< \omega_0$, and time-optimal robust controls have been determined analytically by calculating the evolution operator up to two switchings.
Several known control protocols have been found and new solutions have been also highlighted.

The extension of such methods to more switchings or to the general case of two-input control fields represent the main perspective of this study.
The change of structure in the time-optimal solution for more than two switchings has been observed several times~\cite{KOBZAR2008,KOBZAR2012,van_damme_time-optimal_2018} (e.g. a transition from a bang-bang control to a smooth control with two inputs when the robustness increases). It could be interesting to explain analytically these numerical observations with the framework presented in this paper. Finally, we mention that calculations of second order variations in the sense of \cite{Stefani_2004} shall provide additional information on the optimality of singular or regular trajectories.

\begin{figure}[h]
\begin{center}
\includegraphics[width=\textwidth]{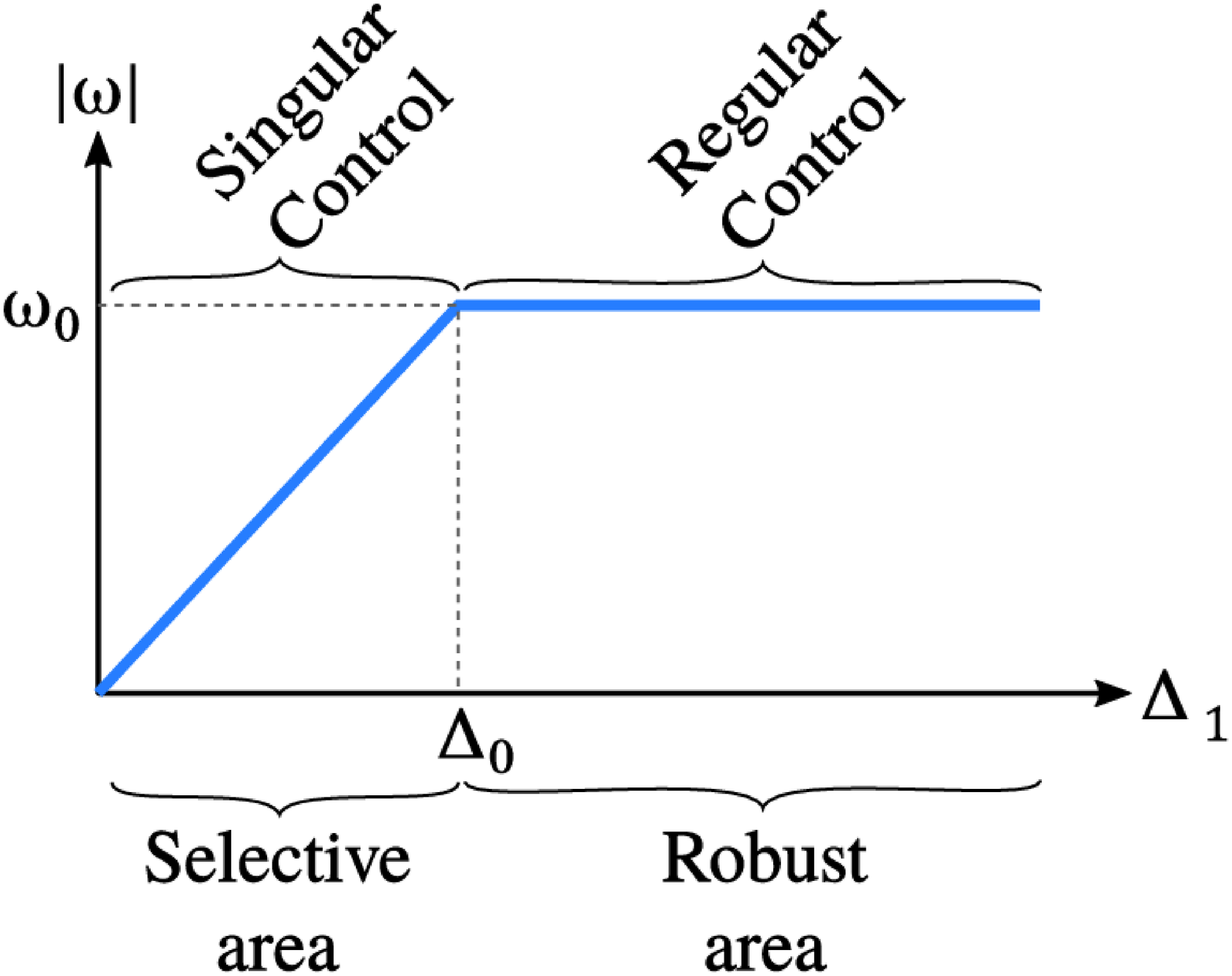}
\end{center}
\caption{Transition from the area of selective controls characterized by singular solutions of the PMP, to the area of robust controls characterized by regular trajectories.}
\label{fig:selective_robust_areas}
\end{figure}

\ack

This work was supported by the French ``Investissements d'Avenir'' program, project ISITE-BFC,emergent project I-QUINS (contract ANR-15-IDEX-03). This project has received funding from the European Union’s Horizon 2020 research and innovation programme under
the Marie-Sklodowska-Curie grant agreement No 765267 (QUSCO). S.Glaser acknowledges support from the Deutsche Forschungsgemeinschaft (DFG, German Research Foundation) under Germany’s Excellence Strategy – EXC-2111 – 390814868. The uthors  acknowledge L. Van Damme for helpful discussions.

\appendix

\section{Product of rotations with quaternions}
\label{sec:product of rotation}
We consider, in this paragraph, the product of two rotations with different axes. We refer to Ref.~\cite{kuipers_quaternions_2002,biedenharn_angular_1981} for further details.
Let $\alpha$ and $\beta$ be two rotation angles around the respective axes $\vec n_1$ and $\vec n_2$. The composition of the two rotations gives an effective angle of rotation $\gamma$ around an axis $\vec n_3$, which can be expressed as:
\begin{equation}
\cos \left(\frac{\gamma}{2}\right)  =  \cos\left( \frac{\alpha - \beta}{2}\right)\sin^2\left( \frac{\theta}{2}\right) + \cos\left( \frac{\alpha+\beta}{2}\right) \cos^2\left(\frac{\theta}{2}\right)
\end{equation}

\begin{equation}
\vec n_3 = \frac{ \left( \cos (\alpha/2) \sin (\beta/2) \vec n_2 + \cos (\beta/2) \sin (\alpha/2) \vec n_1 - \sin (\alpha/2) \sin (\beta/2) \vec n_1 \wedge \vec n_2 \right)}{\sin(\gamma/2)}
\end{equation}
where $\vec n_1.\vec n_2 = \cos(\theta)$. A detailed proof is given only for the computation of the rotation angle. The computation of the rotation axis can be done along the same line~\cite{kuipers_quaternions_2002,biedenharn_angular_1981}. For the sake of simplicity, we use the quaternion representation of rotations.  A rotation of angle $\alpha$ around an axis $\vec n$ is described by the quaternion $\mathbf{q} = \cos(\alpha/2) -\sin(\alpha/2) (n_x \mathbf{i}+n_y \mathbf{j}+ n_z \mathbf{z})$. We can only focus on the real part of the quaternion to determine the rotation angle. A straightforward computation gives:
\begin{equation*}
 \Re[\mathbf{q_1}\mathbf{q_2}] = \cos\left( \frac{\alpha}{2}\right)\cos\left( \frac{\beta}{2}\right)-\sin\left( \frac{\alpha}{2}\right)\sin\left( \frac{\beta}{2}\right) \vec n_1 . \vec n_2 ,
\end{equation*}
where $\Re[\cdot]$ is the real part of the quaternion. Using $\Re[\mathbf{q_1} \mathbf{q_2}]=\cos(\gamma/2)$, we obtain:
 \begin{equation*}
\cos\left( \frac{\gamma}{2}\right) = \cos\left( \frac{\alpha}{2}\right)\cos\left( \frac{\beta}{2}\right)-\sin\left( \frac{\alpha}{2}\right)\sin\left( \frac{\beta}{2}\right) \cos(\theta) ,
 \end{equation*}
and simple algebra leads to:
\begin{equation*}
\cos\left( \frac{\gamma}{2}\right) = \cos\left( \frac{\alpha - \beta}{2}\right)\sin^2\left( \frac{\theta}{2}\right) + \cos\left( \frac{\alpha+\beta}{2}\right) \cos^2\left(\frac{\theta}{2}\right) .
\end{equation*}

\section{Selective controls with local robustness}
\label{sec:localy robust pulse}

In this appendix, we briefly present the extension of the main model to two other situations for which the notion of selectivity and robustness are combined. Here, the goal is to produce a selective transformation, but with a small interval of robustness. This interval is defined by two neighborhood offsets associated with the same target state. Since the problem complexity increases quickly with $N$, we focus on numerical simulations.

For the first example, we choose the following ensemble of offsets:
\[
\Mc C_\Delta = \{-\Delta_2,-\Delta_1,\Delta_1,\Delta_2\},
\]
and the corresponding target transformations:
\[
\Mc C_{\hat U} = \{ \hat \Id,e^{\phi \hat \epsilon_x},e^{\phi \hat \epsilon_x},\hat \Id \}.
\]
The optimal field required to generate $e^{\phi \hat \epsilon_x}$ with $\Delta_1 \neq 0$ is a regular control with potentially several switchings~\cite{boscain-mason,boscainchitour}. For simplicity, we study here a solution with at most one switching (more switchings can also be considered, as in the main text). Hence, the control is characterized by two bangs of durations $t_1$ and $t_2$. We also define $T = t_1+t_2$. The phase of the control is chosen to be positive at the origin. Following the computations of section~\ref{sec:BB}, we found the conditions to produce the identity:
\begin{align}
\label{eq:Delta_opt_case20}
(t_1-t_2) \sqrt{\omega_0^2 + \Delta_2^2} & = 2 k \pi \\
\label{eq:Delta_opt_case2}
(t_1+t_2) \sqrt{\omega_0^2 + \Delta_2^2}& = 2 n \pi .
\end{align}
From a given value of $\Delta_1$, we get $t_1$ and $t_2$, the ratio $k/n$ and $\Delta_2$. The optimality of the pulse sequence is obtained by using the same method as in section~\ref{sec:numerical verification}. The cost function $C$ is minimized numerically with GRAPE~\cite{KHANEJA2005} for a wide range of system parameters and the results are compared with analytic computations. Results are displayed in figure~\ref{fig:landscape_case2}. We observe that \eqref{eq:Delta_opt_case20} and \eqref{eq:Delta_opt_case2} describe the minima of $C$.
\begin{figure}[h]
\begin{center}
\includegraphics[width=\textwidth]{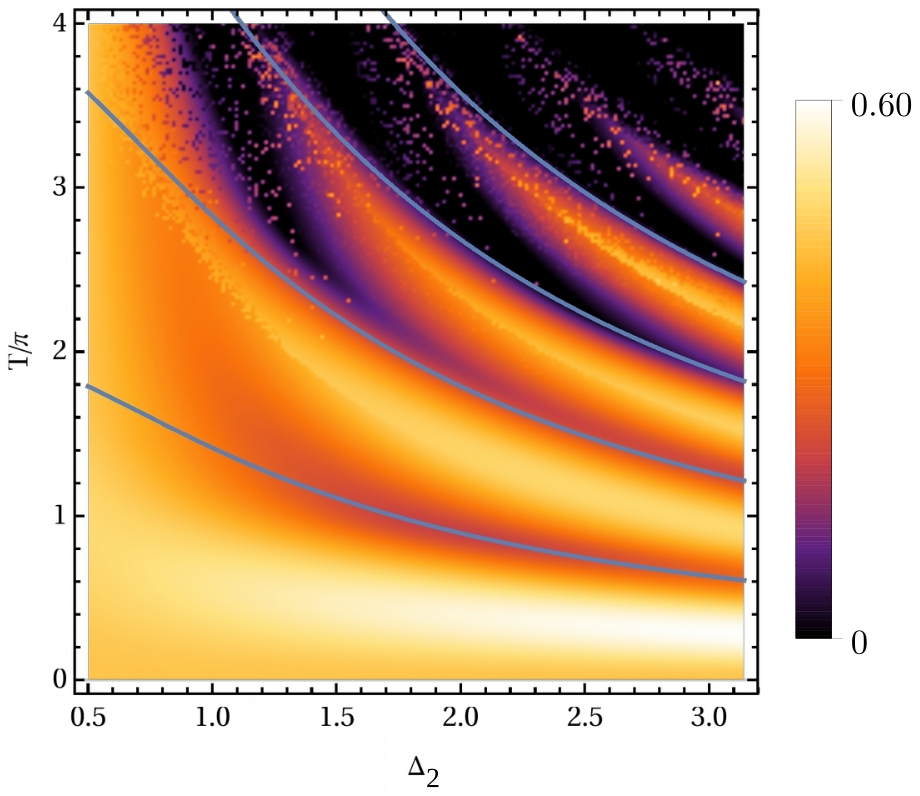}
\end{center}
\caption{(Color online) Plot of $C$ as a function of the control duration $T$ and the offset $\Delta_2$, for a locally-broadband selective rotation of angle $\phi=\pi$. Each point of the contour plot corresponds to an optimization. Blue curves are given by \eqref{eq:Delta_opt_case2} for different values of $k$. Parameters are set to $\Delta_1 = 0.5$, $\omega_0=1$ and $\phi = \pi$.}
\label{fig:landscape_case2}
\end{figure}

\begin{figure}[h]
\begin{center}
\includegraphics[width=\textwidth]{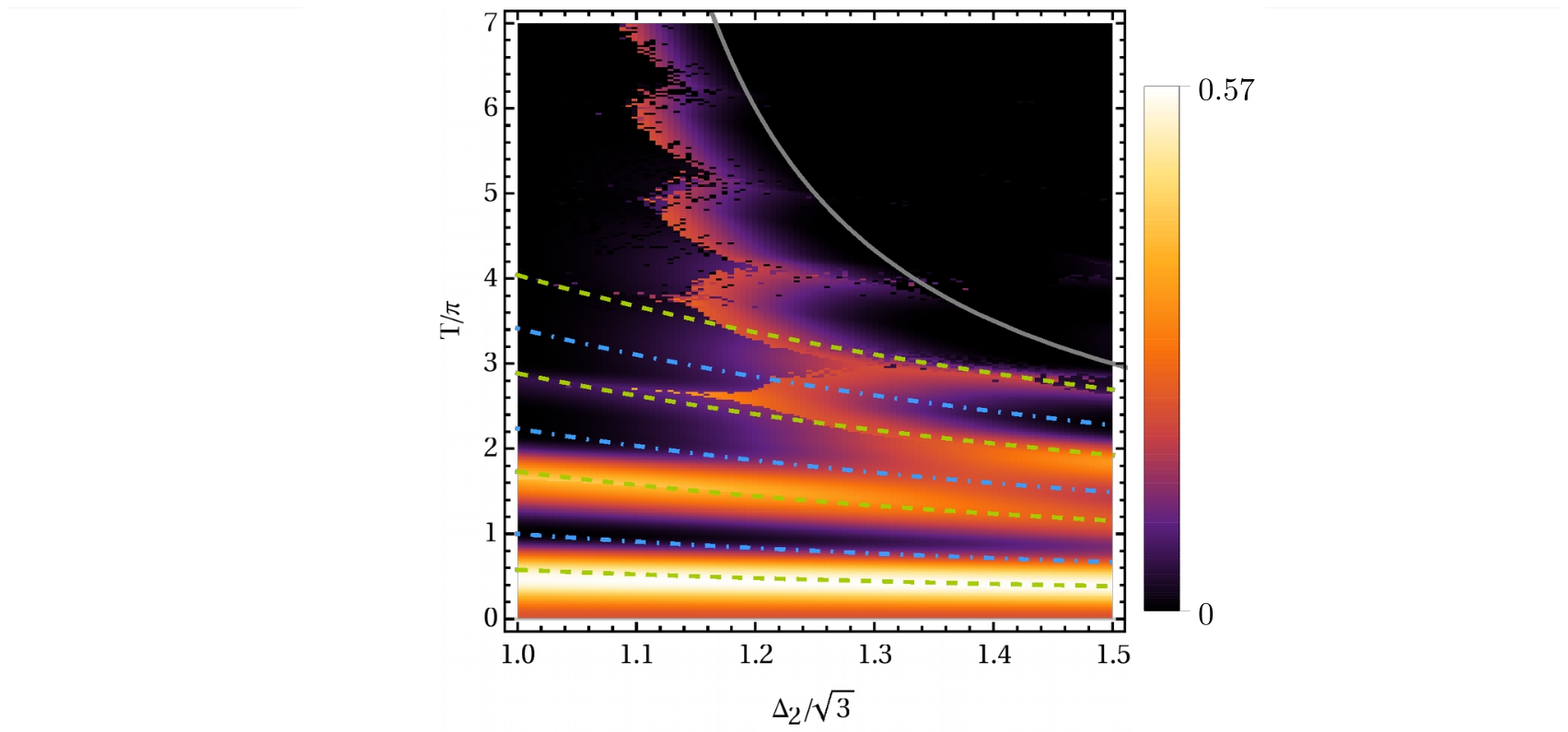}
\end{center}
\caption{(Color online) Same as Fig.~\ref{fig:landscape_case2}, but for the second example of this appendix. $\Delta_1$ is set to $\Delta_1=\sqrt{3}$. Blue dot-dashed curves are given by \eqref{eq:time_locally_broadband_pulse_2} for different values of $k$, while green dashed curves correspond to \eqref{eq:phase_switch_optimal_control_1_offset}. The gray curve is given by \eqref{eq:limit_Delta_2_case_3}.}\label{fig:landscape_case3}
\end{figure}

The second example is defined by:
\[
\mathcal{C}_\Delta = \{ -\Delta_2,-\Delta_1,0,\Delta_1,\Delta_2\} ,
\]
\[
\mathcal{C}_{\hat U} = \{ \hat \Id,\hat \Id, e^{\phi \hat \epsilon_x},\hat \Id,\hat \Id \}.
\]
There is a trivial solution for $\Delta = \sqrt{(2k \pi)^2 - \phi ^2}/T$, so we impose that $T |\Delta_2| \in [\sqrt{(2 \pi)^2 - \phi ^2},\sqrt{(4 \pi)^2 - \phi ^2}]$. In this case, there is no analytic solution and we start directly with a numerical optimization. The minimum of $C$ as a function of $\Delta_2$ is plotted in figure~\ref{fig:landscape_case3}, with $\omega_0=1$, $\Delta_1 = \sqrt{3}$, and $\phi = \pi$. We observe a time-optimal upper bound approximately given by:
\begin{equation}
T = \sqrt{4\pi^2 - \phi^2}\left(\frac{1}{\Delta_1}+ \frac{1}{\Delta_2 - \Delta_1}\right).
\label{eq:limit_Delta_2_case_3}
\end{equation}
This equation is deduced heuristically from the following assumption. The time required by the full transformation is given by the minimum time to reach the target state for $\Delta_1$ (see \eqref{eq:Delta_singular}), plus the one to get the target state for $\Delta_2$, in the rotating frame centered in $\Delta_1$. This time is also given by \eqref{eq:Delta_singular}, but the offset is replaced by the offset difference $\Delta_2-\Delta_1$.

Interesting solutions are also found below the curve of \eqref{eq:limit_Delta_2_case_3}. The control landscape has a complex structure with areas of extremely small costs ($<10^{-13}$) surrounded by areas of extremely large costs. Additionally, the transition between such areas is generally not smooth. Areas of high and low fidelities can be described qualitatively by two functions. The first one is:
\begin{equation}
T = \frac{1}{\Delta_2} \sqrt{4 k^2 \pi^2 -\phi ^2},
\label{eq:time_locally_broadband_pulse_2}
\end{equation}
while the second one is the control phase switching in the optimal control of a single offset (see proposition 6 in \cite{garon2013}):
\begin{equation}
T = -\frac{(4 n \pm 1) \pi}{\Delta_2}.
\label{eq:phase_switch_optimal_control_1_offset}
\end{equation}
These two equations approximate respectively low and high values of $C$ near $\Delta_2 \simeq \Delta_1$, and near \eqref{eq:limit_Delta_2_case_3}.

We also observe other discontinuities in the landscape, which are not fully understood yet. From numerical observation, it seems that the discontinuity occurs by a change of the optimal field from singular to regular extremal trajectories, and reciprocally. Further investigations are required to confirm this point.

\section*{References}
\bibliographystyle{vancouver}
\bibliography{bib_file}

\end{document}